\newtheorem{mydef}{Definition}
\newtheorem{mycor}{Corollary}
\newtheorem{myprop}{Proposition}
\newtheorem{mythe}{Theorem}
\newtheorem{myrem}{Remark}
\newtheorem{myex}{Example}
\DeclareMathOperator*{\LLR}{LLR}
\DeclareMathOperator*{\BSC}{BSC}
\DeclareMathOperator*{\BEC}{BEC}
\DeclareMathOperator*{\GP}{GP}
\DeclareMathOperator*{\SCE}{SCE}
\DeclareMathOperator*{\bic}{bic}
\begin{document}

\sloppy

\title{Erasure Schemes Using Generalized Polar Codes: Zero-Undetected-Error Capacity and Performance Trade-offs}

\author{
  \IEEEauthorblockN{Rajai Nasser\\}
  \IEEEauthorblockA{
Ecole Polytechnique F\'{e}d\'{e}rale de Lausanne,\\
Lausanne, Switzerland\\
Email: rajai.nasser@epfl.ch} 
}



\maketitle

\begin{abstract}
We study the performance of generalized polar (GP) codes when they are used for coding schemes involving erasure. GP codes are a family of codes which contains, among others, the standard polar codes of Ar{\i}kan and Reed-Muller codes. We derive a closed formula for the zero-undetected-error capacity $I_0^{\GP}(W)$ of GP codes for a given binary memoryless symmetric (BMS) channel $W$ under the low complexity successive cancellation decoder with erasure. We show that for every $R<I_0^{\GP}(W)$, there exists a generalized polar code of blocklength $N$ and of rate at least $R$ where the undetected-error probability is zero and the erasure probability is less than $2^{-N^{\frac{1}{2}-\epsilon}}$. On the other hand, for any GP code of rate $I_0^{\GP}(W)<R<I(W)$ and blocklength $N$, the undetected error probability cannot be made less than $2^{-N^{\frac{1}{2}+\epsilon}}$ unless the erasure probability is close to $1$.
\end{abstract}

\section{Introduction}

Polar coding, invented by Ar{\i}kan \cite{Arikan}, is the first low complexity coding technique that achieves the symmetric capacity of binary-input memoryless channels. Polar codes rely on a phenomenon called \emph{polarization}, which is the process of converting a set of identical copies of a given single user binary-input channel, into a set of ``almost extremal channels", i.e., either ``almost perfect channels'', or ``almost useless channels".

The invention of polar codes brought attention to Reed-Muller codes because of their similarity. It was recently shown that Reed-Muller codes achieve the capacity of binary erasure channels under MAP decoding \cite{RMCodes}.

The probability of error of polar codes under successive cancellation decoding  was shown to be equal to $o(2^{-N^{1/2-\epsilon}})$ by Ar{\i}kan and Telatar \cite{ArikanTelatar}. A more refined estimation of the probability of error (which is dependent on the transmission rate $R$) was obtained by Hassani et al. \cite{HasUrb}. They showed that the probability of error under successive cancellation decoding of the polar code is equal to $2^{-2^{\frac{n}{2}+\frac{\sqrt{n}}{2}Q^{-1}\left(\frac{R}{I(W)}\right)+o(\sqrt{n})}}$ where $N=2^{n}$ is the blocklength, $R$ is the transmission rate and $I(W)$ is the capacity of the binary memoryless symmetric (BMS) channel $W$. They also showed that the probability of error under MAP decoding has the same asymptotic estimation. This does not show a good performance of polar codes in terms of the probability of error because the decay is too slow in the blocklength. One attempt to enhance the performance of polar codes was to apply list decoding with CRC error detection \cite{TalVardy}.

Another possible way to enhance the performance of polar codes is through decoding with erasure; it is sometimes desirable to allow the receiver not to decide which message was transmitted, especially when there is a feedback from the receiver to the transmitter: If a confusing string of symbols was received (in the sense that there is a high probability of a decoding error to occur, no matter which message the receiver chooses as the decoded message), the receiver can ask the transmitter to retransmit the message, hoping that the received string will not be confusing in the next transmission.

There are two types of error in decoding with erasure:
\begin{itemize}
\item If the receiver decides on the transmitted message and makes an error, we say that an undetected error occurs.
\item If the receiver does not decide, we say that an erasure occurs.
\end{itemize}
In general, there is a trade-off between the probability of undetected error $p_{\mathrm{ue}}$ and the erasure probability $p_{\mathrm{er}}$: $p_{\mathrm{ue}}$ can be made smaller at the expense of a higher $p_{\mathrm{er}}$. The trade-off between these parameters was first studied by Forney \cite{Forney}. In this paper, we study the tradeoff between these parameters for generalized polar (GP) codes, which are a family of codes that contains, among others, the standard polar codes of Ar{\i}kan and Reed-Muller codes. Moreover, we compute the zero-undetected-error capacity of GP codes under the low complexity successive cancellation decoder with erasure. We also derive an estimate of the erasure probability of GP codes for rates which are less than the zero-undetected-error capacity.
\section{Preliminaries}

\subsection{Useful notations}

For $0\leq \epsilon,\epsilon'\leq 1$, define the following:
\begin{itemize}
\item $\overline{\epsilon}=1-\epsilon$.
\item $\epsilon\ast\epsilon'=\epsilon\overline{\epsilon'}+\overline{\epsilon}\epsilon'$.
\item $m(\epsilon)=\min\{\epsilon,\overline{\epsilon}\}$.
\end{itemize}

For every $x\in\mathbb{F}_2^N$ and every $\mathcal{I}\subset [N]=\{1,\ldots,N\}$, we write $x_{\mathcal{I}}\in\mathbb{F}_2^{\mathcal{I}}$ to denote the subvector containing the components of $x$ whose indices appear in $\mathcal{I}$.

\subsection{Erasure Schemes}

Let $W:\mathbb{F}_2\longrightarrow\mathcal{Y}$ be a binary input channel. A coding scheme with erasure is a triple $\mathbf{C}=(\mathcal{M},f,g)$ where $\mathcal{M}$ is the set of messages, $f:\mathcal{M}\to\mathbb{F}_2^N$ is the encoder mapping, $N$ is the blocklength of the code, $g:\mathcal{Y}^N\to\mathcal{M}\cup\{\mathbf{e}\}$ is the decoder mapping and $\mathbf{e}\notin\mathcal{M}$ represents erasure.

The scheme is used as follows:
\begin{itemize}
\item The transmitter chooses a message $\mathbf{m}$ uniformly in $\mathcal{M}$ and computes $X^N=(X_1,\ldots,X_N)=f(\mathbf{m})$. 
\item The transmitter sends $X_1,\ldots,X_N$ through $N$ independent copies of the channel $W$, i.e., he uses the channel $N$ times. The rate $R$ of the coding scheme is the amount of information that is sent per channel use: $\displaystyle R=\frac{\log_2|\mathcal{M}|}{N}$.
\item The receiver obtains $Y_1,\ldots,Y_N$ and computes $\hat{\mathbf{m}}=g(Y^N)=g(Y_1,\ldots,Y_N)$.
\item If $\hat{\mathbf{m}}=\mathbf{e}$, we say that an erasure has occurred. Thus, the erasure probability of the scheme is
$p_{er}(W,\mathbf{C})=\mathbb{P}(\{\hat{\mathbf{m}}=\mathbf{e}\})$.
\item If $\hat{\mathbf{m}}\neq\mathbf{e}$ and $\hat{\mathbf{m}}\neq\mathbf{m}$, we say that an undetected error has occured.	Therefore, the undetected error probability of the scheme is
$p_{ue}(W,\mathbf{C})=\mathbb{P}\big(\big\{\hat{\mathbf{m}}\notin\{\mathbf{e},\mathbf{m}\}\big\}\big)$.
\end{itemize}

In practice, it is desirable to maximize the rate $R$ while minimizing the blocklength $N$, the erasure probability $p_{er}(W,\mathbf{C})$, the undetected-error probability $p_{ue}(W,\mathbf{C})$ as well as the computational complexity of both the encoder and the decoder. The trade-off between all these performance parameters is one of the most important problems in information theory. In this paper we are interested in studying the trade-off between these parameters asymptotically in $N$ under the following assumptions:
\begin{enumerate}
\item[i] A BMS channel $W$ is used.
\item[ii] Only GP codes are considered.
\item[iii] Only successive cancellation decoders with erasure are considered.
\end{enumerate}

\subsection{Binary-input memoryless symmetric channels}

Binary-input memoryless symmetric (BMS) channels generalize binary symmetric channels (BSC). One can think of a BMS channel as ``a combination of BSCs": Let $\BSC(\epsilon_1),\ldots,\BSC(\epsilon_l)$ be a collection of $l$ binary symmetric channels of crossover probabilities $\epsilon_1,\ldots,\epsilon_l$ respectively. Let $p_1,\ldots,p_l$ be a probability distribution over $[l]:=\{1,\ldots,l\}$ and consider the binary input channel $W$ which operates as follows: During each use of the channel $W$, one of the channels $\BSC(\epsilon_1),\ldots,\BSC(\epsilon_l)$ is chosen with probability $p_1,\ldots,p_l$ respectively. The bit at the input of $W$ is transmitted to the receiver through the chosen BSC. Moreover, we assume that the receiver knows which BSC was used in each channel use of $W$. Formally, the channel $W:\mathbb{F}_2\to[l]\times\mathbb{F}_2$ can be defined as follows:
\begin{equation}
W(i,y|x)=\begin{cases}p_i\cdot(1-\epsilon_i)\quad&\text{if}\;x=y,\\p_i\cdot\epsilon_i\quad&\text{if}\;x\neq y.\end{cases}
\label{eqBMS}
\end{equation}
We denote this channel $W$ as $$\displaystyle W=\sum_{i=1}^l p_i\cdot\BSC(\epsilon_i).$$

\begin{mydef}
A channel $W$ is said to be binary-input memoryless symmetric (BMS) if there exist $0\leq\epsilon_1,\ldots,\epsilon_l\leq 1$ and a probability distribution $\{p_1,\ldots,p_l\}$ over $[l]=\{1,\ldots,l\}$ such that $W$ is equivalent to (in the sense that it is both upgraded and downgraded from) the channel $\displaystyle\sum_{i=1}^l p_i\cdot\BSC(\epsilon_i)$. In this case, we write 
\begin{equation}
W\equiv \displaystyle\sum_{i=1}^l p_i\cdot\BSC(\epsilon_i),
\label{eqCanon}
\end{equation}
and we say that this is a $\BSC$-decomposition of $W$.
\end{mydef}

Note that one can define general BMS channels by considering infinite collections of BSCs. The binary-input additive white Gaussian noise channels are examples of general BMS channels with continuous output alphabet. For the sake of simplicity, we will only consider in this paper BMS channels with finite output alphabets. However, all the main results of this paper are also valid for general BMS channels.

Another remark worth mentioning is that there are infinitely many $\BSC$-decompositions of a given BMS channel $W$. The reason for this is twofold:
\begin{enumerate}
\item[(i)] We can decompose or unite $\BSC$-components having the same crossover probability by decomposing or adding their fractions (i.e., the $p_i$ parameters) respectively.
\item[(ii)] For every $\epsilon>0$, we have $\BSC(\epsilon)\equiv\BSC(\overline{\epsilon})$, therefore we can change the crossover probability of any BSC component to its complement.
\end{enumerate}

This motivates the following definition:

\begin{mydef}
If $\epsilon_i\leq \frac{1}{2}$ for all $1\leq i\leq l$, we say that $\displaystyle W\equiv\sum_{i=1}^l p_i\cdot\BSC(\epsilon_i)$ is a natural $\BSC$-decomposition of $W$. Note that any $\BSC$-decomposition can be naturalized as follows:
$$\displaystyle W\equiv\sum_{i=1}^l p_i\cdot\BSC(\epsilon_i)\equiv \sum_{i=1}^l p_i\cdot\BSC\big(m(\epsilon_i)\big).$$

If $0\leq\epsilon_1<\ldots<\epsilon_l\leq \frac{1}{2}$ and $p_i>0$ for all $1\leq i\leq l$, we say that $\displaystyle W\equiv\sum_{i=1}^l p_i\cdot\BSC(\epsilon_i)$ is the canonical $\BSC$-decomposition of $W$. It can be shown that the canonical $\BSC$-decomposition of $W$ is unique.
\end{mydef}

\begin{myex}
For every $0\leq\epsilon\leq 1$, the binary erasure channel $\BEC(\epsilon)$ is BMS. Moreover, its canonical $\BSC$-decomposition is $$\BEC(\epsilon)\equiv(1-\epsilon)\cdot \BSC(0)+\epsilon \cdot \BSC\left(\frac{1}{2}\right).$$
\end{myex}

\begin{mydef}
\label{defpw}
Let $\displaystyle W\equiv \sum_{i=1}^l p_i\cdot\BSC(\epsilon_i)$. For every $0\leq\epsilon\leq \frac{1}{2}$, define the fraction $p_W(\epsilon)$ of $\BSC(\epsilon)$ in $W$ as follows:
\begin{equation*}
p_W(\epsilon)=\sum_{i=1}^l p_i\cdot \mathds{1}_{\{m(\epsilon_i)=\epsilon\}}.
\end{equation*}
$p_W(\epsilon)$ is well defined because it does not depend on the $\BSC$-decomposition of $W$. I.e., if $\displaystyle \sum_{i=1}^l p_i\cdot\BSC(\epsilon_i)\equiv \sum_{j=1}^{l'} p_j'\cdot\BSC(\epsilon_j')$ then $\displaystyle\sum_{i=1}^l p_i\cdot \mathds{1}_{\{m(\epsilon_i)=\epsilon\}}=\sum_{j=1}^{l'} p_j\cdot \mathds{1}_{\{m(\epsilon_j')=\epsilon\}}$.
\end{mydef}


As we will see later, the parameter $p_W(0)$ will play an important role in our analysis. We introduce another parameter which is also of interest for our study:

\begin{mydef}
Let $W$ be a BMS channel. We define the \emph{best imperfect component} of $W$, denoted $\epsilon_{\bic}(W)$, as follows:
\begin{align*}
\epsilon_{\bic}(W)&=\begin{cases}
0\quad&\text{if}\;I(W)=1,\\
\displaystyle\min_{\substack{\epsilon\in ]0,\frac{1}{2}]:\\p_W(\epsilon)>0}} \epsilon\quad&\text{if}\;I(W)<1,
\end{cases}
\\
&=
\begin{cases}
0\quad&\text{if}\;I(W)=1,\\
\displaystyle\min_{\substack{1\leq i\leq l,\\p_i>0,\;0<\epsilon_i<1}}m(\epsilon_i)\quad&\text{if}\;I(W)<1,
\end{cases}
\end{align*}
\end{mydef}


\subsection{$\mathcal{D}_t$ decoders for BMS channels}

\begin{mydef}
Let $W=\sum_{i=1}^l p_i\cdot\BSC(\epsilon_i)$ and let $0\leq t\leq \frac{1}{2}$. Define the decoder $\mathcal{D}_t:[l]\times\mathbb{F}_2\to \{0,1,\mathbf{e}\}$ of $W$ as follows:
$$\mathcal{D}_t(i,x)=\begin{cases}x\quad&\text{if}\;\epsilon_i\leq t,\\
1\oplus x\quad&\text{if}\;\epsilon_i\geq 1-t,\\
\mathbf{e}\quad&\text{otherwise}.\end{cases}$$
\end{mydef}

\begin{myrem}
$\mathcal{D}_t$ decoders are desirable because no other decoder with erasure can provide a strictly better trade-off between $p_{ue}$ and $p_{er}$ for the code of blocklength 1 and rate 1. Moreover, $\mathcal{D}_t$ decoders are very easy to implement: we compute the log-likelihood ratio $\LLR(y)=\log\frac{\mathbb{P}_{X|Y}(1|y)}{\mathbb{P}_{X|Y}(0|y)}$ (where $X$ and $Y$ are the input and output of $W$ respectively) and then compare with $T=\log\frac{1-t}{t}$:
$$\mathcal{D}_t(y)=\begin{cases}0\quad&\text{if}\;\LLR(y)\leq -T,\\
1\quad&\text{if}\;\LLR(y)\geq T,\\
\mathbf{e}\quad&\text{otherwise}. \end{cases}$$
\end{myrem}

\subsection{Generalized polar codes}

\begin{mydef}
A code $f:\mathcal{M}\to\mathbb{F}_2^N$ is said to be a \emph{generalized polar (GP) code} of parameters $(n,r,\mathcal{I},b)$ if it satisfies the following:
\begin{itemize}
\item $N=2^n$, $\mathcal{M}=\mathbb{F}_2^{\mathcal{I}}$ and $b\in \mathbb{F}_2^{N-r}$.
\item $\mathcal{I}\subset [N]=\{1,\ldots,N\}$ and $|\mathcal{I}|=r$.
\item $f(u)=F^{\otimes n}\cdot \tilde{u}$, where
$$F=\begin{bmatrix}
1 & 1\\
0 & 1
\end{bmatrix},$$
and $\tilde{u}\in\mathbb{F}_2^N$ is such that $\tilde{u}_\mathcal{I}=u$ and $\tilde{u}_{\mathcal{I}^c}=b$.
\end{itemize}
$n$ is called the \emph{number of polarization steps} of the GP code. We denote the code $f$ as $\GP(n,r,\mathcal{I},b)$. Moreover, if $b=0\in\mathbb{F}_2^{N-r}$, we simply write $\GP(n,r,\mathcal{I})$.
\end{mydef}

\begin{myex}
Here are two examples of GP codes:
\begin{itemize}
\item Standard polar codes of Ar{\i}kan: Take $\mathcal{I}$ to be the set of indices of the $r$ synthetic channels having the lowest Bhattacharyya parameters, and take $b$ to be the vector of frozen bits.
\item Reed-Muller codes: Take $\mathcal{I}$ to be the set of indices of the $r$ columns of $F^{\otimes n}$ having the largest number of ones, and take $b=0\in\mathbb{F}_2^{N-r}$.
\end{itemize}
\end{myex}

\subsection{Successive cancellation decoder with erasure of GP codes}

Because of the recursive construction of $F^{\otimes n}$, one can implement the encoder of any GP code in $O(N\log N)$ time exactly like polar codes.

On the other hand, for any given $\GP(n,r,\mathcal{I},b)$ code, there are various decoders that can be considered. One attractive choice is what we call \emph{successive cancellation decoder with erasure (SCE)} which operates similarly like the successive cancellation decoder of polar codes, but instead of applying the ML decoder for each bit $u_i$, we apply a $\mathcal{D}_{t_i}$ decoder for some $0\leq t_i\leq \frac{1}{2}$. The reason why SCE decoders are desirable is because they have low computational complexity.

\begin{mydef}
For every $i\in\mathcal{I}$ let $0\leq t_i\leq\frac{1}{2}$ and let $t=(t_i)_{i\in\mathcal{I}}\in [0,\frac{1}{2}]^{\mathcal{I}}$. The $\mathcal{D}_t$ successive cancellation decoder with erasure (denoted $\SCE\text{-}\mathcal{D}_t$ or simply $\mathcal{D}_t$) for a $\GP(n,r,\mathcal{I},b)$ code operates as follows:
\begin{itemize}
\item For each $i\in\mathcal{I}$, compute $\hat{u}_i$ by applying the $\mathcal{D}_{t_i}$ decoder. The bits are successively decoded exactly in the same order as in the successive cancellation decoder of polar codes.
\item If $\hat{u}_i=\mathbf{e}$ for any $i\in\mathcal{I}$, stop decoding immediately and declare erasure.
\item If $\hat{u}_i\neq\mathbf{e}$ for every $i\in\mathcal{I}$, the output is $\hat{u}=(\hat{u}_i)_{i\in\mathcal{I}}$.
\end{itemize}
\end{mydef}

Two remarks are worth mentioning here:
\begin{itemize}
\item The computational complexity of any $\SCE$ decoder is $O(N\log N)$.
\item If $t_i=0$ for every $i\in\mathcal{I}$, we get a zero-undetected-error scheme.
\end{itemize}

\section{Erasure schemes using GP codes}

\begin{mydef}
Let $W:\mathbb{F}_2\longrightarrow\mathcal{Y}$ be a BMS channel and define
\begin{equation}
I_0^{\GP}(W):=\sum_{\substack{y\in\mathcal{Y}:\\W(y|1)=0}}W(y|0)=\sum_{\substack{y\in\mathcal{Y}:\\W(y|0)=0}}W(y|1).
\label{eqZUECap}
\end{equation}
It can be easily shown that $I_0^{\GP}(W)=p_W(0)$.
\end{mydef}

The following theorem, which is the main result of this paper, shows that $I_0^{\GP}(W)$ is the \emph{zero-undetected-error capacity of GP codes for $W$ under $\SCE$ decoders}.

\begin{mythe}
\label{mainThe}
Let $W$ be a fixed BMS channel. We have the following:
\begin{itemize}
\item For every $R<I_0^{\GP}(W)$, every $\beta<\frac{1}{2}$ and every $n$ large enough, there exists a GP code of blocklength $N=2^n$ and of rate at least $R$ for which the low complexity $\mathcal{D}_0\text{-}\SCE$ decoder (which induces a zero-undetected-error scheme) has an erasure probability of order $2^{-2^{\beta\cdot n}}$.
\item For every $\alpha>0$, every $\beta>\frac{1}{2}$, every $n$ large enough, and every GP code of rate $I_0^{\GP}(W)<R<I(W)$ and blocklength $N=2^n$, if $p_{er}<1-\alpha$ then $p_{ue}>2^{-2^{\beta\cdot n}}$. In other words, the undetected error probability cannot be made better than $2^{-N^{\frac{1}{2}+o(1)}}$ unless the erasure probability is of order $1-o(1)$.
\end{itemize}
\end{mythe}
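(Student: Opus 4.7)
The plan is built around a single central observation: the parameter $p_W(0) = I_0^{\GP}(W)$ polarizes under the Ar{\i}kan transform in exactly the same way as the Bhattacharyya parameter of a binary erasure channel. I would first prove, by a direct mixture-model computation on $W\equiv\sum p_i\BSC(\epsilon_i)$, the identities
\begin{equation*}
p_{W^-}(0) = p_W(0)^2, \qquad p_{W^+}(0) = 2p_W(0) - p_W(0)^2,
\end{equation*}
noting that $\mathcal{D}_0$ on $W^-$ recovers $u_1$ iff both underlying BSCs at the two coordinates are perfect, while $\mathcal{D}_0$ on $W^+$ recovers $u_2$ iff at least one of them is perfect. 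This identifies $\bigl(p_{W_N^{(i)}}(0)\bigr)_{i\in[N]}$ with the polarization process of a $\BEC(1-p_W(0))$, so all quantitative BEC polarization results transfer verbatim.

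For the achievability part, I would invoke the Ar{\i}kan--Telatar fast polarization result applied to this auxiliary BEC: for every $\beta<1/2$ and $n$ large, the fraction of indices with $p_{W_N^{(i)}}(0)\geq 1 - 2^{-2^{\beta n}}$ exceeds $p_W(0) - o(1)$. Choosing $\mathcal{I}$ of size $\lceil RN\rceil$ inside this ``almost-perfect'' set makes the $\mathcal{D}_0$-$\SCE$ decoder a zero-undetected-error scheme by the very definition of $\mathcal{D}_0$, and a union bound gives
\begin{equation*}
p_{er} \leq \sum_{i\in\mathcal{I}} \bigl(1 - p_{W_N^{(i)}}(0)\bigr) \leq N\cdot 2^{-2^{\beta n}},
\end{equation*}
with the polynomial factor $N=2^n$ absorbed into any $\beta'<\beta$.

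For the converse, the martingale identity $\mathbb{E}_i\bigl[p_{W_N^{(i)}}(0)\bigr] = p_W(0)$ combined with $p_{W_N^{(i)}}(0)\in[0,1]$ forces, for any fixed $\delta>0$, the fraction of indices with $p_{W_N^{(i)}}(0)\geq 1-\delta$ to be at most $p_W(0) + o(1)$ as $n\to\infty$. Consequently any GP code of rate $R>I_0^{\GP}(W)$ must place a positive fraction of ``imperfect'' indices in $\mathcal{I}$, for which the synthetic channel retains a nontrivial $\BSC(\epsilon)$ component with $0<\epsilon<1$ and hence $Z(W_N^{(i)})>0$. A Forney-style hypothesis-testing bound on the pair of codewords differing only in such a coordinate then lower bounds $p_{ue}$ in terms of $Z(W_N^{(i)})$ whenever the erasure budget $p_{er}<1-\alpha$ is respected. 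The main technical obstacle is to show that the maximum of $Z(W_N^{(i)})$ over the imperfect indices in $\mathcal{I}$ is at least $2^{-2^{\beta n}}$ for every $\beta>1/2$; this requires a moderate-deviation/CLT analysis of the Bhattacharyya log-martingale in the spirit of Hassani--Mondelli--Urbanke, tracking the contribution of the imperfect BSC mass separately from the already-perfect mass so that the $\sqrt n$ Gaussian fluctuation in the log-exponent governs the worst imperfect $Z$ in any rate-$R$ set $\mathcal{I}$.
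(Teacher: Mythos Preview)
Your achievability argument is essentially the paper's: both hinge on the identity $p_{W^\pm}(0)$ following the $\BEC$ recursion, so the zero-mass process is literally the Bhattacharyya process of $\BEC\bigl(1-I_0^{\GP}(W)\bigr)$, and Ar{\i}kan--Telatar gives the $2^{-2^{\beta n}}$ erasure bound after a union bound. The paper phrases this as an explicit degradation $W\succeq W'=\BEC(1-I_0^{\GP}(W))$ and quotes Hassani et al.\ for a finer rate-dependent exponent, but for the theorem as stated your version suffices.

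The converse, however, has a real gap at the step ``a Forney-style hypothesis-testing bound \ldots lower bounds $p_{ue}$ in terms of $Z(W_N^{(i)})$''. No such bound exists: the Bhattacharyya parameter does not control the $p_{ue}$/$p_{er}$ trade-off of a $\mathcal{D}_t$ decoder. Concretely, take a synthetic channel of the form $(1-\delta)\,\BSC(\epsilon_1)+\delta\,\BSC(1/2)$ with $\delta$ fixed and $\epsilon_1$ arbitrarily small. Then $p_{W^s}(0)=0$ (fully ``imperfect''), $Z(W^s)\geq\delta$ is bounded away from zero, yet the decoder $\mathcal{D}_{t}$ with $t=\epsilon_1$ achieves $p_{er}^{(i)}=\delta<1-\alpha$ and $p_{ue}^{(i)}=(1-\delta)\epsilon_1$, which can be made smaller than any prescribed $2^{-2^{\beta n}}$. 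So even after you establish that some imperfect index in $\mathcal{I}$ has $Z(W^s)\geq 2^{-2^{\beta n}}$ (which is indeed true, since $Z(W^-)\geq Z(W)$ forces $Z(W^s)\geq Z(W)^{2^{k}}$ with $k$ the number of $+$ steps), you still cannot conclude anything about $p_{ue}$.

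The paper fixes exactly this by introducing a different parameter, the \emph{best imperfect component} $\epsilon_{\bic}(W)=\min\{\epsilon\in(0,1/2]:p_W(\epsilon)>0\}$, and proving the recursions $\epsilon_{\bic}(W^-)\geq\epsilon_{\bic}(W)$ and $\epsilon_{\bic}(W^+)\geq\epsilon_{\bic}(W)^2$. These are precisely the inequalities needed to run the Ar{\i}kan--Telatar argument and conclude that $\epsilon_{\bic}(W^s)\geq 2^{-2^{\beta' n}}$ for all but an $o(1)$ fraction of $s$ whenever $\beta'>1/2$. The point is that $\epsilon_{\bic}$, unlike $Z$, gives the direct lower bound
\[
p_{ue}^{(i)}=\sum_{0<\epsilon\leq t_i}\epsilon\,p_{W^s}(\epsilon)\;\geq\;\epsilon_{\bic}(W^s)\bigl(1-p_{W^s}(0)-p_{er}^{(i)}\bigr),
\]
so once you also pick the index with $p_{W^s}(0)<\alpha/2$ (available since $R>I_0^{\GP}(W)$), the constraint $p_{er}^{(i)}<1-\alpha$ forces $p_{ue}^{(i)}\geq(\alpha/2)\,2^{-2^{\beta' n}}$. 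Replacing your $Z$-based step with this $\epsilon_{\bic}$ argument is the missing ingredient.
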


In order to prove Theorem \ref{mainThe}, we need a few lemmas and propositions. The next proposition shows the first point of the Theorem. In fact, it provides a better estimate for the erasure probability:


\begin{myprop}
\label{prop1}
Let $W:\mathbb{F}_2\longrightarrow\mathcal{Y}$ be a BMS channel. For every $R<I_0^{\GP}(W)$, there exists a GP code of blocklength $N=2^n$ and of rate at least $R$ for which the low complexity $\mathcal{D}_0\text{-}\SCE$ decoder (which induces a zero-undetected-error scheme) has an erasure probability of order $2^{-2^{\frac{n}{2}+Q^{-1}\left(\frac{R}{I_0^{\GP}(W)}\right)\frac{\sqrt{n}}{2}+o(\sqrt{n})}}$, where $Q(x)=\mathbb{P}(\{\mathcal{N}(0,1)\geq x\})$ is the standard Q-function.
\end{myprop}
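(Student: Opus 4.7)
The plan is to reduce the analysis of the $\mathcal{D}_0\text{-}\SCE$ decoder on $W$ to that of the standard successive cancellation decoder on the single binary erasure channel $V:=\BEC(1-I_0^{\GP}(W))$, whose capacity satisfies $I(V)=p_W(0)=I_0^{\GP}(W)$. The central observation is that under the Ar{\i}kan polar transform the parameter $p_W(0)$ evolves \emph{exactly} like the capacity of a BEC.

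To justify this, fix a natural BSC-decomposition of $W$ and view each use of $W$ as first drawing a BSC component (``perfect'' with probability $p_W(0)$, otherwise ``imperfect'') and then reading its output. Writing $W^-$ and $W^+$ for the minus and plus synthetic channels produced from two independent copies of $W$, a direct combinatorial argument gives
\begin{equation*}
p_{W^-}(0)=p_W(0)^2,\qquad p_{W^+}(0)=1-(1-p_W(0))^2,
\end{equation*}
because unambiguous recovery of $U_1=X_1\oplus X_2$ from $(Y_1,Y_2)$ requires \emph{both} $X_1$ and $X_2$ to be determined, whereas unambiguous recovery of $U_2=X_2$ from $(Y_1,Y_2,U_1)$ succeeds as soon as \emph{at least one} of them is determined (if $X_1$ is known then $X_2=X_1\oplus U_1$). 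These are precisely the BEC capacity recursions on $V$.

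Iterating, for every $n$ and every $i\in[N]$ with $N=2^n$ one obtains $p_{W_N^{(i)}}(0)=I(V_N^{(i)})$, where $W_N^{(i)}$ and $V_N^{(i)}$ denote the $i$-th synthetic channels of $W$ and $V$ respectively after $n$ polarization steps. Since $V_N^{(i)}$ is itself a BEC with erasure probability $1-I(V_N^{(i)})$, and the $\mathcal{D}_0$ decoder applied to $W_N^{(i)}$ erases with probability $1-p_{W_N^{(i)}}(0)$, the per-synthetic-channel erasure probabilities of $\mathcal{D}_0\text{-}\SCE$ on $W$ agree index by index with those of SC decoding on the BEC polar code for $V$. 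Consequently, for any $\mathcal{I}\subset[N]$, the union-bound estimate $\sum_{i\in\mathcal{I}}(1-p_{W_N^{(i)}}(0))$ on the total erasure probability of $\GP(n,|\mathcal{I}|,\mathcal{I})$ under $\mathcal{D}_0\text{-}\SCE$ coincides with the corresponding bound for the BEC polar code on $V$.

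Invoking the refined Hassani--Urbanke second-order estimate for polar codes on the BEC, for every $R<I(V)=I_0^{\GP}(W)$ one can choose $\mathcal{I}\subset[N]$ with $|\mathcal{I}|\geq RN$ whose sum of BEC erasure probabilities is of order $2^{-2^{\frac{n}{2}+Q^{-1}(R/I_0^{\GP}(W))\frac{\sqrt{n}}{2}+o(\sqrt{n})}}$; the $\GP(n,|\mathcal{I}|,\mathcal{I})$ code for this $\mathcal{I}$ then achieves the claimed erasure bound on $W$, proving the proposition. I expect the main technical hurdle to lie in the first step: since $p_W(0)$ is defined via the canonical BSC-decomposition, one must verify carefully that the ``perfect component'' counting interpretation used to derive the polar recursions is indeed invariant under the choice of decomposition and is correctly propagated by the polar transform acting on $W$'s composite $[l]\times\mathbb{F}_2$ output alphabet. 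Once that recursion is secured, the remainder is a transparent transfer of a well-known BEC result.
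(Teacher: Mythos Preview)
Your proposal is correct and follows essentially the same route as the paper: both reduce the problem to polar coding over the BEC $V=\BEC(1-I_0^{\GP}(W))$ and then invoke the Hassani--Urbanke second-order estimate. The only stylistic difference is that the paper constructs $V$ explicitly as a degraded version of $W$ (by merging all ``imperfect'' outputs into a single erasure symbol) and then asserts that the $\mathcal{D}_0$-$\SCE$ decoder on $W$ is operationally the same as SC on $V$, whereas you instead establish the recursions $p_{W^-}(0)=p_W(0)^2$ and $p_{W^+}(0)=2p_W(0)-p_W(0)^2$ directly and match them to the BEC capacity recursions index by index (these recursions are, incidentally, exactly the content of the paper's Corollary~\ref{corEvoI}). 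Your concern about invariance of $p_W(0)$ under the choice of $\BSC$-decomposition is already handled by Definition~\ref{defpw} and the explicit formula \eqref{eqZUECap}, so it is not a genuine obstacle.
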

\begin{proof}
Define $W':\mathbb{F}_2\longrightarrow\mathbb{F}_2\cup\{\mathbf{e}\}$ as follows:
$$W'(y'|x)=\begin{cases} \displaystyle\sum_{\substack{y\in \mathcal{Y}:\\W(y|x\oplus 1)=0}} W(y|x)\quad&\text{if}\;y'=x,\\\vspace*{-4mm}
&\\
\displaystyle\sum_{\substack{y\in \mathcal{Y}:\\W(y|x\oplus 1)>0}} W(y|x)\quad&\text{if}\;y'=\mathbf{e},\\\vspace*{-4mm}
&\\
0\quad&\text{otherwise}. \end{cases}$$

In other words, for each $x\in\mathbb{F}_2$ we contract all the output symbols of $W$ for which we can decide without error that the input was $x$ to one output symbol of $W'$ that we also denote by $x$. Moreover, we contract all the remaining uncontracted symbols to the erasure symbol $\mathbf{e}$.

Let $\epsilon=1-I_0^{\GP}(W)$. One can easily check that $W'=\BEC(\epsilon)\preceq W$. Now for every $R<I_0^{\GP}(W)=1-\epsilon=I(W')$, there exists a polar code for $W'$ of rate at least $R$ and whose probability of error under successive cancellation decoder is equal to $2^{-2^{\frac{n}{2}+Q^{-1}\left(\frac{R}{I(W')}\right)\frac{\sqrt{n}}{2}+o(\sqrt{n})}}$ (see \cite{HasUrb}). One can use the same code for $W$ and apply the $\mathcal{D}_0\text{-}\SCE$ decoder. This induces a zero-undetected-error scheme.

It can be easily seen that the erasure probability for the $\mathcal{D}_0\text{-}\SCE$ decoder of the GP code for $W$ is of the same order as the error probability of the successive cancellation decoder of the polar code for $W'$.
\end{proof}

\vspace*{3mm}

In order to prove the second point of Theorem \ref{mainThe}, we will need the analysis tools of polarization theory. Let us first recall the basic notations and definitions.

Let $W:\mathbb{F}_2\longrightarrow\mathcal{Y}$ be a binary-input channel. We define the two channels $W^-:\mathbb{F}_2\longrightarrow\mathcal{Y}\times\mathcal{Y}$ and $W^+:\mathbb{F}_2\longrightarrow\mathcal{Y}\times\mathcal{Y}\times \mathbb{F}_2$ as follows:
\begin{equation}
W^-(y_1,y_2|u_1)=\frac{1}{2}\sum_{u_2\in \mathbb{F}_2}W(y_1|u_1\oplus u_2)W(y_2|u_2),
\label{eqMinus}
\end{equation}
\begin{equation}
W^+(y_1,y_2,u_1|u_2)=\frac{1}{2}W(y_1|u_1\oplus u_2)W(y_2|u_2).
\label{eqplus}
\end{equation}
For every $s=(s_1,\ldots,s_n)\in\{-,+\}^n$, we define $W^s$ recursively as $W^s:=((W^{s_1})^{s_2}\ldots)^{s_n}$.

\begin{myprop}
If $W$ is BMS, then $W^-$ and $W^+$ are BMS as well. More precisely, if $W\equiv\displaystyle\sum_{i=1}^l p_i\cdot\BSC(\epsilon_i)$ then
\begin{equation}
\label{eqWminus}
W^-\equiv \sum_{i=1}^l\sum_{j=1}^l p_ip_j\cdot\BSC(\epsilon_i\ast\epsilon_j),
\end{equation}
and
\begin{align}
W^+\equiv \sum_{i=1}^l\sum_{j=1}^l p_ip_j\cdot\Bigg(&(\epsilon_i\ast \epsilon_j)\cdot \BSC\left(\frac{\epsilon_i\overline{\epsilon}_j}{\epsilon_i\ast\epsilon_j}\right)+(\epsilon_i\ast \overline{\epsilon}_j)\cdot \BSC\left(\frac{\epsilon_i\epsilon_j}{\epsilon_i\ast\overline{\epsilon}_j}\right)\Bigg).\label{eqWplus}
\end{align}
\label{propEvo}
\end{myprop}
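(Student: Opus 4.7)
The plan is to reduce the statement to the case of a pair of (possibly distinct) BSCs and then perform a direct calculation. The key observation is that the decomposition $W\equiv\sum_i p_i\BSC(\epsilon_i)$ exhibits $W$ as a channel with side information: a genie reveals to the receiver which $\BSC$-component was selected at each use. Since the two channel uses in \eqref{eqMinus} and \eqref{eqplus} are independent, the joint side information is a pair $(i,j)$ with product distribution $p_ip_j$, and conditioned on $(i,j)$ the polar transforms of $W$ reduce to the corresponding transforms of the pair $(\BSC(\epsilon_i),\BSC(\epsilon_j))$. Hence $W^-$ and $W^+$ are themselves mixtures indexed by $(i,j)$ with weights $p_ip_j$, and it suffices to identify the $(i,j)$-conditional channel as a specific $\BSC$ (for the minus transform) or a specific mixture of two $\BSC$s (for the plus transform).

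For the minus transform, I would write the two outputs as $Y_1=u_1\oplus u_2\oplus N_1$ and $Y_2=u_2\oplus N_2$, where $N_1\sim\text{Bern}(\epsilon_i)$ and $N_2\sim\text{Bern}(\epsilon_j)$ are independent and $u_2$ is uniform and independent of $u_1$. Then $Y_2$ is uniform and independent of $u_1$, so all the information about $u_1$ is carried by $Y_1\oplus Y_2=u_1\oplus N_1\oplus N_2$. This immediately gives that the $(i,j)$-conditional minus channel is $\BSC(\epsilon_i\ast\epsilon_j)$, and summing over $(i,j)$ yields \eqref{eqWminus}.

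For the plus transform, the decoder has access to $u_1$ as well, so it can form $Z_1:=Y_1\oplus u_1=u_2\oplus N_1$ and $Z_2:=Y_2=u_2\oplus N_2$, i.e.\ two independent BSC observations of $u_2$ through crossovers $\epsilon_i$ and $\epsilon_j$. I would then condition on the binary sufficient statistic $Z_1\oplus Z_2=N_1\oplus N_2$, which is independent of $u_2$: with probability $\epsilon_i\ast\overline{\epsilon}_j=\overline{\epsilon}_i\overline{\epsilon}_j+\epsilon_i\epsilon_j$ the XOR equals $0$, and the posterior of $u_2$ is then symmetric about $Z_1$ with crossover $\frac{\epsilon_i\epsilon_j}{\epsilon_i\ast\overline{\epsilon}_j}$; with probability $\epsilon_i\ast\epsilon_j$ the XOR equals $1$, and the posterior is symmetric with crossover $\frac{\epsilon_i\overline{\epsilon}_j}{\epsilon_i\ast\epsilon_j}$ (using $\BSC(\epsilon)\equiv\BSC(\overline{\epsilon})$ to align with \eqref{eqWplus}). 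Averaging this two-component mixture over $(i,j)$ with weights $p_ip_j$ produces exactly \eqref{eqWplus}.

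The main obstacle is really the bookkeeping inside the plus-transform case: one has to pick the right binary sufficient statistic and verify that the conditional posterior of $u_2$ is genuinely symmetric, so that the conditional channel is indeed a $\BSC$ and not merely a binary-input binary-output channel whose two transition probabilities happen not to match. Once this two-$\BSC$ calculation is settled, the full BMS statement follows at once from the genie/mixture reduction, with no further computation.
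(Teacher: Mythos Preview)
Your argument is correct and is essentially the approach the paper has in mind: expand the channel using its $\BSC$-decomposition \eqref{eqBMS}, plug into the polar transform formulas \eqref{eqMinus} and \eqref{eqplus}, and invoke $\BSC(\epsilon)\equiv\BSC(\overline{\epsilon})$; your genie/side-information and sufficient-statistic language is a clean operational way to organize precisely that computation. The paper's own proof is the one-line version of what you wrote.
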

\begin{proof}
We use Equations \eqref{eqBMS}, \eqref{eqMinus} and \eqref{eqplus} and we apply the fact that $\BSC(\epsilon)\equiv\BSC(\overline{\epsilon})$ for every $\epsilon\in[0,1]$.
\end{proof}

\vspace*{3mm}

Proposition \ref{propEvo} can be used to derive the effect of polarization on $I_0^{\GP}(W)$ and $\epsilon_{\bic}(W)$ :

\begin{mycor}
\label{corEvoI}
$I_0^{\GP}(W^-)=I_0^{\GP}(W)^2$ and $I_0^{\GP}(W^+)=2I_0^{\GP}(W)-I_0^{\GP}(W)^2$.
\end{mycor}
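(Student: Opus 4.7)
The plan is to use the $\BSC$-decompositions of $W^-$ and $W^+$ provided by Proposition \ref{propEvo}, together with the identity $I_0^{\GP}(W) = p_W(0)$, which in any $\BSC$-decomposition $W \equiv \sum_i p_i \cdot \BSC(\epsilon_i)$ can be written as $I_0^{\GP}(W) = \sum_i p_i \mathds{1}_{\{\epsilon_i \in \{0, 1\}\}}$. Thus the problem reduces to identifying which terms in the decompositions \eqref{eqWminus} and \eqref{eqWplus} correspond to ``perfect'' $\BSC$ components, i.e., those with crossover $0$ or $1$.

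The key auxiliary fact I would establish first is that for $\epsilon, \epsilon' \in [0, 1]$, one has $\epsilon * \epsilon' \in \{0, 1\}$ iff both $\epsilon, \epsilon' \in \{0, 1\}$. The case $\epsilon * \epsilon' = 0$ forces $\epsilon \overline{\epsilon'} = \overline{\epsilon} \epsilon' = 0$, giving $(\epsilon, \epsilon') \in \{(0,0), (1,1)\}$; the case $\epsilon * \epsilon' = 1$ reduces to the previous one via the easily verified identity $\overline{\epsilon * \epsilon'} = \overline{\epsilon} * \epsilon'$, yielding $(\epsilon, \epsilon') \in \{(0,1),(1,0)\}$. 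Plugging this into \eqref{eqWminus} immediately gives
\[
p_{W^-}(0) = \sum_{i, j} p_i p_j \, \mathds{1}_{\{\epsilon_i * \epsilon_j \in \{0,1\}\}} = \left(\sum_i p_i \mathds{1}_{\{\epsilon_i \in \{0,1\}\}}\right)^2 = p_W(0)^2,
\]
which is the $W^-$ formula.

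For $W^+$, I would examine each of the two crossover expressions in \eqref{eqWplus}. A short direct computation shows that $\epsilon_i \overline{\epsilon}_j/(\epsilon_i * \epsilon_j)$ lies in $\{0,1\}$ iff $\epsilon_i \in \{0,1\}$ or $\epsilon_j \in \{0,1\}$, and the same conclusion holds for $\epsilon_i \epsilon_j/(\epsilon_i * \overline{\epsilon}_j)$. Consequently, both $\BSC$ components in the $(i,j)$ summand are simultaneously perfect under exactly the same condition. When this condition holds, their combined weight in the $(i,j)$ summand is $p_i p_j \bigl((\epsilon_i * \epsilon_j) + (\epsilon_i * \overline{\epsilon}_j)\bigr) = p_i p_j$; otherwise the contribution to $p_{W^+}(0)$ is zero. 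Summing over all such pairs yields
\[
p_{W^+}(0) = \sum_{(i,j):\, \epsilon_i \in \{0,1\}\text{ or }\epsilon_j \in \{0,1\}} p_i p_j = 1 - (1 - p_W(0))^2 = 2 p_W(0) - p_W(0)^2,
\]
which is the $W^+$ formula.

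The main delicacy will be the $W^+$ bookkeeping: verifying that the two extremality conditions really coincide, and handling the edge cases where one of the weights $\epsilon_i * \epsilon_j$ or $\epsilon_i * \overline{\epsilon}_j$ vanishes (so the corresponding crossover is formally $0/0$). These cases are harmless because the corresponding $\BSC$ term carries coefficient zero and may simply be dropped. Once this minor case analysis is in place, both identities follow in parallel from the conservation law $(\epsilon * \epsilon') + (\epsilon * \overline{\epsilon'}) = 1$.
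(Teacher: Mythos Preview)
Your proposal is correct and follows essentially the same route as the paper's proof: both identify $I_0^{\GP}$ with $p_W(0)$ and read off the perfect-$\BSC$ components from the decompositions in Proposition~\ref{propEvo}, using that $m(\epsilon_i\ast\epsilon_j)=0$ iff $m(\epsilon_i)=m(\epsilon_j)=0$ for $W^-$, and that each of the two $W^+$ crossovers is extremal iff $m(\epsilon_i)=0$ or $m(\epsilon_j)=0$. Your version is slightly more explicit about the $0/0$ edge cases and the identity $(\epsilon\ast\epsilon')+(\epsilon\ast\overline{\epsilon'})=1$, but there is no substantive difference.
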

\begin{proof}
Let $W\equiv\sum_{i=1}^l p_i\cdot \BSC(\epsilon_i)$ be a $\BSC$-decomposition of $W$. Using the equations of Proposition \ref{propEvo}, one can see that:
\begin{itemize}
\item $I_0^{\GP}(W^-)=p_{W^-}(0)\stackrel{(a)}{=}p_W(0)^2=I_0^{\GP}(W)^2$, where (a) follows from the fact that $m(\epsilon_i\ast\epsilon_j)=0$ if and only if $m(\epsilon_i)=m(\epsilon_j)=0$.
\item $I_0^{\GP}(W^+)=p_{W^+}(0)\stackrel{(b)}{=}2p_W(0)-p_W(0)^2=2I_0^{\GP}(W)-I_0^{\GP}(W)^2$, where (b) follows from the fact that
$$m\left(\frac{\epsilon_i\overline{\epsilon}_j}{\epsilon_i\ast\epsilon_j} \right)=0\quad\Leftrightarrow\quad m(\epsilon_i)=0\text{ or }m(\epsilon_j)=0,$$
and
$$m\left(\frac{\epsilon_i\epsilon_j}{\epsilon_i\ast\overline{\epsilon}_j}\right)=0\quad\Leftrightarrow\quad m(\epsilon_i)=0\text{ or }m(\epsilon_j)=0.$$
\end{itemize}
\end{proof}

\begin{mycor}
\label{corEvoE}
We have:
$$\epsilon_{\bic}(W^-)=\begin{cases}
2\epsilon_{\bic}(W)\cdot\overline{\epsilon_{\bic}(W)}\quad &\text{if }p_W(0)=0,\\
\epsilon_{\bic}(W)\quad &\text{otherwise}.
\end{cases}
$$
$$\epsilon_{\bic}(W^+)=\frac{\epsilon_{\bic}(W)^2}{\epsilon_{\bic}(W)^2+(1-\epsilon_{\bic}(W))^2}.$$
\end{mycor}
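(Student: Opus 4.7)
The plan is to work with a natural BSC-decomposition $W\equiv\sum_{i=1}^l p_i\BSC(\epsilon_i)$ (so $\epsilon_i\in[0,\tfrac12]$), apply Proposition~\ref{propEvo} to read off BSC-decompositions of $W^-$ and $W^+$, and then identify $\epsilon_{\bic}$ of each polarized channel as the minimum $m(\cdot)$-value of a BSC component whose coefficient is positive and whose crossover lies in $(0,1)$. The case $I(W)=1$ is trivial (every relevant quantity is $0$), so throughout I take $\epsilon^*:=\epsilon_{\bic}(W)>0$.

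\textbf{The $W^-$ formula.} Equation \eqref{eqWminus} expresses $W^-$ as a mixture of $\BSC(\epsilon_i\ast\epsilon_j)$. The identity $1-2(\epsilon\ast\epsilon')=(1-2\epsilon)(1-2\epsilon')$ shows $\epsilon_i\ast\epsilon_j\in[0,\tfrac12]$ for naturalized inputs, so this decomposition is already natural; moreover $\epsilon_i\ast\epsilon_j=0$ iff $\epsilon_i=\epsilon_j=0$, and $\epsilon\mapsto\epsilon\ast\epsilon'$ is non-decreasing on $[0,\tfrac12]$ for $\epsilon'\in[0,\tfrac12]$. Hence $\epsilon_{\bic}(W^-)$ is attained at the smallest admissible pair $(\epsilon_i,\epsilon_j)$ with $p_ip_j>0$ and $(\epsilon_i,\epsilon_j)\neq(0,0)$: when $p_W(0)>0$ the pair $(0,\epsilon^*)$ is allowed and gives $\epsilon^*$, which beats the alternative $\epsilon^*\ast\epsilon^*=2\epsilon^*(1-\epsilon^*)\geq\epsilon^*$; when $p_W(0)=0$ every $\epsilon_i\geq\epsilon^*$, forcing the minimum to be $\epsilon^*\ast\epsilon^*=2\epsilon^*(1-\epsilon^*)$, exactly as claimed.

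\textbf{The $W^+$ formula.} Denote the two families of crossovers in \eqref{eqWplus} by $g_{ij}:=\epsilon_i\overline{\epsilon}_j/(\epsilon_i\ast\epsilon_j)$ and $f_{ij}:=\epsilon_i\epsilon_j/(\epsilon_i\ast\overline{\epsilon}_j)=\epsilon_i\epsilon_j/(\epsilon_i\epsilon_j+\overline{\epsilon}_i\overline{\epsilon}_j)$. Routine algebraic checks give: $f_{ij}\leq\tfrac12$ (equivalent to $\epsilon_i+\epsilon_j\leq 1$), $f_{ij}$ is non-decreasing in each argument on $[0,\tfrac12]$, and $f_{ij}=0$ iff $\epsilon_i=0$ or $\epsilon_j=0$; likewise $g_{ij}\in\{0,1\}$ whenever $\epsilon_i=0$ or $\epsilon_j=0$, while $g_{ij}\leq\tfrac12$ iff $\epsilon_i\leq\epsilon_j$. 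Therefore only pairs with $\epsilon_i,\epsilon_j\geq\epsilon^*$ contribute to $\epsilon_{\bic}(W^+)$, and on that region monotonicity yields $f_{ij}\geq f(\epsilon^*,\epsilon^*)=\epsilon^{*2}/(\epsilon^{*2}+(1-\epsilon^*)^2)$, attained by indices with $\epsilon_i=\epsilon_j=\epsilon^*$ (which exist since $p_W(\epsilon^*)>0$). A parallel monotonicity argument for $g$, treating $\epsilon_i\leq\epsilon_j$ and $\epsilon_i>\epsilon_j$ symmetrically via $m(g_{ij})$, gives $m(g_{ij})\geq\epsilon^*$ on the region; the elementary inequality $\epsilon^{*2}/(\epsilon^{*2}+(1-\epsilon^*)^2)\leq\epsilon^*$ (equivalent to $(1-2\epsilon^*)(1-\epsilon^*)\geq 0$) then certifies that the $f$-family achieves the overall minimum, giving the claimed formula.

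The $W^-$ case reduces to one monotonicity argument and is short. The main obstacle is the $W^+$ case: although each step is elementary algebra, one must carefully track (i) when the rational crossovers $g$ and $f$ collapse to $0$ or $1$ (those being the trivial components excluded from $\epsilon_{\bic}$), (ii) the direction of monotonicity of $f$ and of $m(g)$ on the two triangular sub-regions of $[0,\tfrac12]^2$ cut out by $\epsilon_i\leq\epsilon_j$ versus $\epsilon_i>\epsilon_j$, and (iii) the comparison between the two families, in order to certify that the $f$-family's minimum is in fact the overall minimum.
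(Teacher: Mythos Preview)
Your argument is correct and follows the same approach as the paper: take a natural $\BSC$-decomposition, use Proposition~\ref{propEvo} to list the components of $W^-$ and $W^+$, discard those with $m(\cdot)=0$, and minimize via monotonicity of the relevant rational functions in $(\epsilon_i,\epsilon_j)$. The paper's proof is terser for $W^+$ (it just writes the combined minimum and states the answer), whereas you explicitly verify the monotonicity of $f_{ij}$, the lower bound $m(g_{ij})\geq\epsilon^*$, and the comparison $\epsilon^{*2}/(\epsilon^{*2}+(1-\epsilon^*)^2)\leq\epsilon^*$; these are exactly the details the paper suppresses under ``similar reasoning.''
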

\begin{proof}
If $I(W)=1$ (i.e., $\epsilon_{\bic}(W)=0$), then $I(W^-)=I(W^+)=1$ which implies that $\epsilon_{\bic}(W^-)=\epsilon_{\bic}(W^+)=0$. This shows the corollary for $I(W)=1$.

Assume now that $I(W)<1$ so that $\epsilon_{\bic}(W)>0$. Let $\displaystyle W\equiv\sum_{i=1}^l p_i\cdot\BSC(\epsilon_i)$ be the canonical $\BSC$-decomposition of $W$.

Since $0\leq\epsilon_i,\epsilon_j\leq\frac{1}{2}$ for every $1\leq i,j\leq l$, it is easy to see that:
\begin{itemize}
\item $0\leq \epsilon_i\ast\epsilon_j\leq \frac{1}{2}$. This means that the crossover probabilities appearing in \eqref{eqWminus} do not need to be complemented.
\item $\epsilon_i\ast\epsilon_j=0$ if and only if $\epsilon_i=\epsilon_j=0$.
\end{itemize}
Now since the function $\epsilon \ast\epsilon'$ is increasing in both $\epsilon$ and $\epsilon'$ (assuming $0\leq \epsilon,\epsilon'\leq \frac{1}{2}$), we conclude that
\begin{align*}
\epsilon_{\bic}(W^-)&=\min_{\substack{1\leq i,j\leq l,\\ m(\epsilon_i\ast\epsilon_j)>0}}m(\epsilon_i\ast\epsilon_j)\\
&=\begin{cases}
2\epsilon_{\bic}(W)\cdot(1-\epsilon_{\bic}(W))\quad &\text{if }p_W(0)=0,\\
\epsilon_{\bic}(W)\quad &\text{otherwise}.
\end{cases}
\end{align*}

We apply a similar reasoning on $m\left(\frac{\epsilon_i\overline{\epsilon}_j}{\epsilon_i\ast\epsilon_j} \right)$ and $m\left(\frac{\epsilon_i\epsilon_j}{\epsilon_i\ast\overline{\epsilon}_j}\right)$. We obtain:
\begin{align*}
\epsilon_{\bic}(W^+)&=\min\Big\{\frac{\epsilon_i\epsilon_j}{\epsilon_i\ast\overline{\epsilon}_j},\frac{\epsilon_i\overline{\epsilon}_j}{\epsilon_i\ast\epsilon_j},1-\frac{\epsilon_i\overline{\epsilon}_j}{\epsilon_i\ast\epsilon_j}:\; 1\leq i,j\leq l,\;\epsilon_i>0,\;\epsilon_j>0\Big\}\\
&=\frac{\epsilon_{\bic}(W)^2}{\epsilon_{\bic}(W)^2+(1-\epsilon_{\bic}(W))^2}.
\end{align*}
\end{proof}

\begin{myprop}
\label{prop2}
Let $W:\mathbb{F}_2\longrightarrow\mathcal{Y}$ be a BMS channel and let $GP(n,r,\mathcal{I},b)$ be a generalized polar code of rate $R=\frac{r}{2^n}$ and blocklength $N=2^n$. If $I_0^{\GP}(W)<R<I(W)$ then for every $\beta>\frac{1}{2}$, every $\alpha>0$ and every $n$ large enough, there is no $\SCE$ decoder which can make the undetected error probability lower than $2^{-N^{\beta}}$ unless it makes the erasure probability at least $1-\alpha$.
\end{myprop}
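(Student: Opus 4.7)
My plan is a proof by contradiction: assume both $p_{er} < 1-\alpha$ and $p_{ue} < 2^{-N^\beta}$, and derive a lower bound on $p_{ue}$ that violates this for $n$ large enough. The first ingredient is a pair of exact product identities for the $\SCE\text{-}\mathcal{D}_t$ decoder. Writing $\hat p_{er,i}$, $\hat p_{ue,i}$, $\hat p_{c,i}$ for the erasure, undetected-error and correct-decision probabilities of $\mathcal{D}_{t_i}$ on the synthetic channel $W_n^{(i)}$, BMS symmetry together with linearity of GP codes makes the erasure event at bit $i$ invariant under the actual past decoded bits: conditional on any $\hat u_{1\ldots i-1}$, the effective input to $W_n^{(i)}$ is either equal to or the bit-flip of $u_i$, and $\mathcal{D}_{t_i}$ is symmetric in its two non-erasure outputs. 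Telescoping yields $1 - p_{er} = \prod_{i\in\mathcal{I}}(1-\hat p_{er,i})$; conditioning on a correct past at each step gives $\mathbb{P}(\hat u = u) = \prod_{i\in\mathcal{I}}\hat p_{c,i}$; and consequently $p_{ue} = \prod(1-\hat p_{er,i}) - \prod \hat p_{c,i}$.

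Next I would identify many ``non-trivial'' information bits in $\mathcal{I}$. By Corollary~\ref{corEvoI}, $I_0^{\GP}$ is a bounded $[0,1]$-martingale on the $\{-,+\}$ polarization tree with $\{0,1\}$ as its only fixed points, so it polarizes, and the fraction of indices with $I_0^{\GP}(W_n^{(i)}) > 1-\delta$ tends to $I_0^{\GP}(W) < R$. Consequently at least $(R - I_0^{\GP}(W) - o(1))N$ indices in $\mathcal{I}$ satisfy $I_0^{\GP}(W_n^{(i)}) \leq 1-\delta$. To further bound $\epsilon_{\bic}(W_n^{(i)})$ from below, Corollary~\ref{corEvoE} shows that $\ell := -\log_2 \epsilon_{\bic}$ is non-increasing under $^-$ (both recursions $\epsilon$ and $2\epsilon(1-\epsilon)$ are $\geq \epsilon$ on $[0,1/2]$) and at most doubles, up to $O(1)$, under $^+$ (since $\epsilon_{\bic}(W^+) \asymp \epsilon_{\bic}(W)^2$). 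Iterating, $\epsilon_{\bic}(W^s) \leq 2^{-N^{\beta'}}$ forces the path $s$ to contain at least $\beta' n - O(1)$ plus-steps, and for any fixed $\beta' > 1/2$ the Chernoff bound makes the fraction of such paths $o(1)$. Fixing some $\beta' \in (1/2, \beta)$, the set $\mathcal{B} := \{ i \in \mathcal{I} : I_0^{\GP}(W_n^{(i)}) \leq 1-\delta,\ \epsilon_{\bic}(W_n^{(i)}) \geq 2^{-N^{\beta'}} \}$ thus satisfies $|\mathcal{B}| \geq cN$ for some $c = c(R, I_0^{\GP}(W)) > 0$ and all $n$ large enough.

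The last ingredient is the single-bit trade-off $\hat p_{c,i} \leq I_0^{\GP}(W_n^{(i)}) + \hat p_{ue,i}/\epsilon_{\bic}(W_n^{(i)})$, obtained by writing $W_n^{(i)}$ in its canonical BSC-decomposition and observing that every non-perfect BSC component admitted by $t_i$ has crossover $\geq \epsilon_{\bic}$, hence contributes to $\hat p_{ue,i}$ proportionally. The hypothesis $p_{ue} < 2^{-N^\beta}$ combined with the product identities gives $\prod \hat p_{c,i} = 1 - p_{er} - p_{ue} > \alpha/2$ for $n$ large, so $\sum_i (1-\hat p_{c,i}) \leq -\log \prod \hat p_{c,i} < \log(2/\alpha)$. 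Substituting the trade-off on $\mathcal{B}$ upgrades this to $\sum_{i \in \mathcal{B}} \hat p_{ue,i}/\epsilon_{\bic}(W_n^{(i)}) \geq \delta|\mathcal{B}| - \log(2/\alpha) = \Omega(N)$, and the lower bound $\epsilon_{\bic}(W_n^{(i)}) \geq 2^{-N^{\beta'}}$ on $\mathcal{B}$ then gives $\sum_i \hat p_{ue,i} \geq 2^{-N^{\beta'}}\cdot\Omega(N)$. Pushing this back through $p_{ue} = \prod(1-\hat p_{er,i})\bigl[1 - \prod(1 - \hat p_{ue,i}/(1-\hat p_{er,i}))\bigr]$ via the elementary inequality $1 - \prod(1-x_j) \geq (\sum x_j)/(1+\sum x_j)$ yields $p_{ue} \gtrsim \alpha \cdot 2^{-N^{\beta'}} \cdot N$, which exceeds $2^{-N^\beta}$ for $n$ large enough since $\beta > \beta'$. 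This contradicts the hypothesis and completes the argument.

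The step I expect to be the main obstacle is the first one: fully rigorously justifying the two clean product identities requires a careful BMS-symmetry argument, since the synthetic output at bit $i$ does depend on the (possibly erroneous) earlier decoded bits, and one must verify that this dependence always takes the form of an input flip to which $\mathcal{D}_{t_i}$ responds symmetrically. Once those identities and the bit-level trade-off are in hand, the remaining machinery -- polarization of $I_0^{\GP}$, the Chernoff count of plus-steps bounding $\epsilon_{\bic}$, and the chain of elementary inequalities -- is essentially mechanical.
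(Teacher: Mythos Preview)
The product identities you propose as the first ingredient are \emph{false}, and this is not a technicality that a more careful symmetry argument can repair. The picture you describe --- that substituting an incorrect past $\hat u_1^{i-1}$ into the bit-$i$ LLR computation amounts to replacing the input of $W_n^{(i)}$ by $u_i$ or by $u_i\oplus 1$ --- does not hold: a single wrong past bit flips the sign of \emph{one} constituent sub-LLR entering the recursive combination at level $i$, not of the entire synthetic LLR, so the resulting law matches neither input hypothesis of $W_n^{(i)}$. A concrete two-bit check on $W=\BSC(p)$ with $\mathcal I=\{1,2\}$, threshold $t_1=\tfrac12$ (so bit~1 never erases and $\hat u_1=y_1\oplus y_2$ is the ML decision), and $t_2$ chosen to erase precisely on the $\BSC(\tfrac12)$ component of $W^+$, already breaks the first identity: a direct enumeration gives $|L_2^+(y_1,y_2,\hat u_1)|=2\log\tfrac{1-p}{p}$ for \emph{every} $(y_1,y_2)$, so the SCE decoder never erases ($1-p_{er}=1$), while $\prod_i(1-\hat p_{er,i})=1\cdot(1-2p(1-p))<1$. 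The same example also violates $\mathbb P(\hat u=u)=\prod_i\hat p_{c,i}$. Since your entire chain of inequalities in the third paragraph is built on these identities, the argument does not close.

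By contrast, the paper works with a \emph{single} index rather than a product over $\mathcal I$. The polarization of $I_0^{\GP}$ and the control of $\epsilon_{\bic}$ along the tree --- which you set up correctly in your second paragraph --- are used only to guarantee the existence of \emph{one} $i\in\mathcal I$ with $p_{W_n^{(i)}}(0)<\alpha/2$ and $\epsilon_{\bic}(W_n^{(i)})>2^{-N^{\beta'}}$. Your single-bit trade-off $\hat p_{ue,i}\ge \epsilon_{\bic}(W_n^{(i)})\bigl(1-p_{W_n^{(i)}}(0)-\hat p_{er,i}\bigr)$ is precisely the paper's key inequality at that index. The passage from that one coordinate to the whole scheme then rests on the elementary containment $\{\text{SCE correct}\}\subset\{\text{genie-aided decoder at bit }i\text{ correct}\}$, which requires no independence and no product structure. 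So your middle ingredients survive intact, but the global product machinery must be abandoned in favour of the single-index route.
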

\begin{proof}
Let $(B_n)_{n\geq1}$ be i.i.d. uniform random variables in $\{-,+\}$. Define the channel-valued process $(W_n)_{n\geq0}$ as follows:
\begin{align*}
W_0 &:= W,\\
W_{n} &:=W_{n-1}^{B_n}\;\forall n\geq1.
\end{align*}

Let $\frac{1}{2}<\beta'<\beta$ and let $n$ be large enough so that we have $\frac{\alpha}{2}\cdot 2^{-N^{\beta'}}\geq 2^{-N^{\beta}}$, where $N=2^n$.

Corollary \ref{corEvoI} shows that the process $I_0^{\GP}(W_n)$ is a martingale process. Therefore, $I_0^{\GP}(W_n)$ converges almost surely. Moreover, one can show by standard polarization theory techniques that $I_0^{\GP}(W_n)=p_{W_n}(0)$ converges almost surely to 0 or 1. Furthermore, for every $\epsilon>0$ we have:
$$\lim_{n\to\infty}\mathbb{P}(\{ p_{W_n}(0)<\epsilon \})=1-p_{W}(0).$$
Therefore, as $n$ becomes large, the fraction of indices $s\in\{-,+\}^n$ such that $p_{W^s}(0)\geq \epsilon$ is roughly at most $I^{\GP}_0(W)=p_W(0)$.

On the other hand, from Corollary \ref{corEvoE}, we can easily see that $\epsilon_{\bic}(W^-)\geq \epsilon_{\bic}(W)$ and $\epsilon_{\bic}(W^+)\geq \epsilon_{\bic}(W)^2$. By applying the same analysis of \cite{ArikanTelatar}, but to $\epsilon_{\bic}$ instead of the Bhattacharyya parameter, one can show that if $I(W)<1$, then the fraction of indices $s\in\{-,+\}^n$ such that $\epsilon_{\bic}(W^s)\geq 2^{-2^{\beta' n}}$ goes to 1. Therefore, for $n$ large enough, if $R>I_0^{\GP}(W)=p_W(0)$, there exists at least one index $s\in\{-,+\}^n$ whose corresponding index in $F^{\otimes n}$ appears in the generator matrix of the GP code and which satisfies $\epsilon_{\bic}(W^s)>2^{-2^{\beta' n}}$ and $p_{W^s}(0)<\frac{\alpha}{2}$. Let $i\in[2^n]$ be the index of the column of $F^{\otimes n}$ corresponding to $s$ and let $0\leq t_i\leq \frac{1}{2}$ be the threshold used for $W^s$ in an $\SCE-\mathcal{D}_t$ decoder. Let $p_{ue}^{(i)}$ and $p_{er}^{(i)}$ be the erasure probability and undetected error probability of the $\mathcal{D}_{t_i}$ decoder applied to $W^s$ respectively. We have:
$$\displaystyle p_{er}^{(i)}=\sum_{\epsilon>t_i}p_W(\epsilon),$$
and
\begin{align}
p_{ue}^{(i)}&=\sum_{\epsilon\leq t_i}\epsilon \cdot p_{W^s}(\epsilon)=\sum_{\epsilon_{\bic}(W^s)\leq\epsilon\leq t_i}\epsilon \cdot p_{W^s}(\epsilon)\nonumber\\
&\geq \sum_{\epsilon_{\bic}(W^s)\leq\epsilon\leq t_i}\epsilon_{\bic}(W^s) \cdot p_{W^s}(\epsilon)\nonumber\\
&=\epsilon_{\bic}(W^s)\cdot(1-p_{W^s}(0)-p_{er}^{(i)})\nonumber\\
&\geq 2^{-N^{\beta'}}\cdot\left(1-\frac{\alpha}{2}-p_{er}^{(i)}\right).\label{lalabatata}
\end{align}
Therefore, if $p_{er}^{(i)}\leq 1-\alpha$ then $p_{ue}^{(i)}\geq \frac{\alpha}{2}\cdot 2^{-N^{\beta'}}\geq 2^{-N^{\beta}}$. Hence $p_{ue}^{(i)}$ cannot be made less than $2^{-N^\beta}$ unless $p_{er}^{(i)}$ is at least $1-\alpha$. The proposition now follows from the fact that the erasure probability and the undetected error probability of the whole scheme are lower bounded by $p_{er}^{(i)}$ and $p_{ue}^{(i)}$ respectively.
\end{proof}

The proof of Theorem \ref{mainThe} now follows from Propositions \ref{prop1} and \ref{prop2}.

\section{Discussion}

The tradeoff obtained here between the undetected error probability and erasure probability for rates $R> I^{\GP}_0(W)$ is very sharp and does not depend on the rate $R$. A more refined estimation of the tradeoff between $p_{ue}$ and $p_{er}$ which is dependent on $R$ remains an open problem.

\section*{Acknowledgment}
I would like to thank Emre Telatar for helpful discussions.

\bibliographystyle{IEEEtran}
\bibliography{bibliofile}
\end{document}